\documentclass[final,12pt,notitlepage]{article}
\usepackage{amsfonts}
\usepackage{amsmath}
\usepackage{mathtools}
\usepackage{amsthm}
\usepackage{amssymb}
\usepackage[svgnames]{xcolor}
\usepackage[sc,labelsep=period,textfont=normalfont]{caption}

\usepackage[margin=1.2in]{geometry}
\usepackage[bottom,multiple,splitrule]{footmisc}
\usepackage{graphicx}
\usepackage[authoryear,round,longnamesfirst]{natbib}
\usepackage[inline,shortlabels]{enumitem}
\setlist[enumerate]{topsep=4pt,itemsep=4pt,partopsep=1ex,parsep=0pt}
\setlist[enumerate,1]{label=\emph{(\alph*)}}
\setlist[enumerate,2]{label=\emph{(\roman*)}}
\setlist[enumerate,3]{label=\emph{(\arabic*)}}
\setlist[itemize]{topsep=4pt,itemsep=4pt,partopsep=1ex,parsep=0pt}
\usepackage{hyperref}
\usepackage[onehalfspacing]{setspace}
\usepackage{titlesec}
\usepackage{sectsty}
\usepackage{dsfont}
\usepackage[capitalize,nameinlink]{cleveref}
\usepackage{xargs}
\usepackage{comment}
\usepackage{booktabs}
\usepackage[font={footnotesize},justification=justified]{caption}
\usepackage[labelformat=simple]{subcaption}
\usepackage{multirow,array}
\usepackage{tkz-euclide}
\usetikzlibrary{angles} 
\usepackage{tikz}
\usetikzlibrary{patterns, decorations.pathreplacing, arrows.meta,automata,positioning}
\usetikzlibrary{arrows}
\usetikzlibrary{calc}

\hypersetup{
pdftitle={Coalitions in Repeated Games},
pdfauthor={S. Nageeb Ali and Ce Liu},
pdfkeywords={self-enforcing, repeated game, cooperative games}
}
\hypersetup{
colorlinks,breaklinks,
citecolor=DarkBlue,urlcolor=Indigo,linkcolor=DarkBlue
}

\bibpunct[, ]{(}{)}{;}{a}{}{,}
\theoremstyle{plain}
\newtheorem{theorem}{Theorem}

\newtheorem{proposition}{Proposition}
\newtheorem{definition}{Definition}

\newtheorem{lemma}{Lemma}

\theoremstyle{definition}

\renewcommand{\tilde}{\widetilde}

  {\list{}{\leftmargin=0.2in\rightmargin=0.2in}\item[]}%
  {\endlist}
  {\list{}{\leftmargin=0.3in\rightmargin=0.3in}\item[]}%
  {\endlist}

\renewcommand{\paragraph}[1]{{\flushleft \textbf{\color{DarkRed}#1 }}}

\newcommand{\coalitions}{\mathcal{C}}

\renewcommand{\Re}{\mathbb{R}}
\newcommand{\hchi}{\raise0.4ex\hbox{$\chi$}}

\DeclareMathOperator*{\argmin}{arg\,min}

\renewcommand{\equiv}{:=}

\sectionfont{\color{DarkRed}}
\subsectionfont{\color{DarkRed}}
\subsubsectionfont{\color{DarkRed}}

\begin{document}


\title{On Conservative Stable Standard of Behavior and Perfect Coalitional Equilibrium\thanks{We used \href{refine.ink}{Refine.ink} for proofreading.}
}

\author{S. Nageeb Ali\thanks{Department of Economics, Pennsylvania State University. Email: \href{mailto:nageeb@psu.edu}{nageeb@psu.edu}.}\and Ce Liu\thanks{Department of Economics, Michigan State University. Email: \href{mailto:celiu@msu.edu}{celiu@msu.edu}.} }

\date{April 6, 2026}

\maketitle

\begin{abstract}

We show that in \cite{greenberg1989application}'s coalitional repeated game situation, every nondiscriminating Conservative Stable Standard of Behavior is a subset of the set of Perfect Coalitional Equilibrium \citep{aliliu2026main} paths. Moreover, the set of Perfect Coalitional Equilibrium paths itself is a nondiscriminating Conservative Stable Standard of Behavior. The set of Perfect Coalitional Equilibrium paths is therefore the maximal nondiscriminating Conservative Stable Standard of Behavior.
\end{abstract}

\newpage



 \setcounter{page}{1}
\setstretch{1.25}

\section{Overview}
\cite{greenberg1989application} applied the theory of social situations to infinitely repeated strategic form games with discounting. The focus of \cite{greenberg1989application} is the \emph{repeated game Nash situations} $(\gamma,\Gamma)$, defined on page 284-285 of the paper, where $\Gamma$ is the set of ``positions'' in a repeated strategic form game, and  $\gamma$ is the inducement correspondence that allows only \textit{individual} deviation. The central result of \cite{greenberg1989application}, his Theorem 6.2, demonstrates that in a repeated game Nash situation $(\gamma,\Gamma)$, his solution concept---\emph{Conservative Stable Standard of Behavior (CSSB)}---is equivalent to subgame perfect Nash equilibrium (SPNE) in the following sense:
\begin{theorem}(\citealp[Theorem 6.2]{greenberg1989application}) \label{thm:greenberg-main}
Let $PEP$ denote the collection of equilibrium paths in all SPNE, and let $\sigma^P$ be the standard of behavior (SB) defined by $\sigma^P(G)=PEP$  {for all } $G\in \Gamma$. Then $\sigma^P$ is the unique maximal nondiscriminating CSSB for the repeated Nash game situation $(\gamma,\Gamma)$.
\end{theorem}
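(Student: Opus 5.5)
Since this is Greenberg's theorem, the plan is to reconstruct its proof in the language of the theory of social situations. Recall the setup. A position $G\in\Gamma$ is a continuation repeated game after some history; because the stage game and discounting are stationary, every position is strategically identical, up to an additive constant in payoffs. An SB $\sigma$ assigns to each position a set of infinite outcome paths. Conservative stability means two things.

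\emph{Internal stability:} no path $a\in\sigma(G)$ admits an individual deviation by player $i$ at some period $t$ to a position $H\in\gamma$ such that every path in $\sigma(H)$ makes $i$ strictly better off.

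\emph{External stability:} every path not in $\sigma(G)$ admits such a deviation.

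Nondiscriminating means $\sigma(G)$ is the same set $S$ of paths for all $G$, so we identify $\sigma$ with $S$. I will show two claims: (i) every nondiscriminating CSSB $S$ satisfies $S\subseteq PEP$; (ii) $PEP$ itself is a nondiscriminating CSSB. Together these give that $\sigma^P$ is the unique maximal one.

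For (i), fix a nondiscriminating CSSB $S$. First, $S$ is nonempty: otherwise external stability would require every path to be dominated via deviations to positions with empty SB, but domination through an empty set must be excluded or vacuous. I will check Greenberg's convention here; if needed, nonemptiness follows from existence of a stage Nash equilibrium path, which could then not be dominated. Let $w_i=\min_{b\in S} U_i(b)$, which exists by compactness, after first showing that $S$ is closed in the product topology. For any $a\in S$ and any one-shot deviation by $i$ at period $t$, internal stability says the deviation is not profitable against \emph{every} continuation in $S$. In particular it is not profitable against the worst one. So $i$'s continuation value along $a$ from $t$ is at least the one-shot deviation payoff plus $\delta w_i$. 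Then build a strategy profile: play $a$; after a deviation by $i$, switch to the path $b^i\in S$ minimizing $i$'s payoff, and recurse. This simple-strategy (Abreu penal code) profile is subgame perfect by the one-shot deviation principle. The incentive constraints hold on $b^i$ as well because $b^i\in S$. Hence $a\in PEP$.

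For (ii), set $S=PEP$. Internal stability follows because each $a\in PEP$ is supported by punishments that are themselves SPNE paths. A deviation to $H$ therefore yields some path in $PEP=\sigma(H)$ that does not benefit the deviator, so the deviation is not dominating. External stability is where I use Abreu's result: $PEP$ is compact, and the worst punishments $\underline v_i$ are attained. Take $a\notin PEP$. By Abreu's characterization, $a$ is an SPNE path iff every one-shot deviation is unprofitable when followed by $i$'s worst SPNE path. So some $i$, at some $t$, gains strictly from deviating even against the worst continuation in $PEP$. Since deviation payoffs are evaluated conservatively (against all of $\sigma(H)=PEP$), this is exactly a dominating deviation.

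I expect the main obstacle to be matching Greenberg's exact definitions of conservative domination. The delicate point is whether a deviating player compares with all paths in $\sigma(H)$ or only some of them, and how multi-period deviations (rather than one-shot deviations) enter the inducement correspondence. A second difficulty is closedness of an arbitrary CSSB in (i), which is needed for the minimum to exist. If closedness fails, I would instead use infima and an $\varepsilon$-approximation argument. Alternatively, I would show directly that the closure of $S$ still satisfies internal stability, using continuity of discounted payoffs, and then apply the argument to the closure.
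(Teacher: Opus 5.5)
Your plan is correct and is the paper's proof of its coalitional analogue, \cref{thm:greenberg-main-analogue}, specialized to singleton deviations. The worst-path reduction is \cref{prop:5-4}; it works because \emph{some} path in $S$ leaves $i$ no better off, and $b^i$ is no better for $i$ than that path. The recursion on $a,b^1,\dots,b^n$ is the ``if'' direction of \cref{thm:6-1}, and Abreu's compactness and penal-code results stand in for \cref{thm:compact,thm:6-1} in (ii).

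The one correction is in (i): drop the attempt to show that $S$ itself is closed, because a nondiscriminating CSSB need not be closed. For example, let player~2 be a dummy, and let player~1's pure minmax $\underline{m}$ be below her best-response payoff against some action of player~2. Take the paths $x$ such that, for every $\tau$ and every $\zeta^1_\tau\in Z^1$, player~1 strictly prefers $x$ to playing $\zeta^1_\tau$ in period $\tau$ and then receiving continuation value $\underline{m}$. These paths form a nondiscriminating CSSB that is not closed: player~1's payoff on it approaches $\underline{m}$ but never attains it. So commit to your last fallback, internal stability of the closure, which is exactly \cref{prop:5-3}.
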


In Section 7 of \cite{greenberg1989application}, he proposes a modification of the repeated game Nash situation $(\gamma, \Gamma)$ to allow for coalitional deviations in repeated strategic form games. More specifically, at the bottom of page 291, \citeauthor{greenberg1989application} proposes modifying the individual deviation inducement correspondence $\gamma$ to allow coalitional deviations (denoted $\gamma^{\coalitions}$), which results in the \emph{coalitional repeated game situation} $(\gamma^{\coalitions},\Gamma)$. His specification here mirrors that of Example 1 of \cite{aliliu2026main}. In light of the connection between SPNE and CSSB for the repeated game Nash situation $(\gamma,\Gamma)$ demonstrated in his Theorem 6.2 (\cref{thm:greenberg-main} quoted above), \citeauthor{greenberg1989application} suggests studying the maximal nondiscriminating CSSB in the coalitional repeated game situation $(\gamma^{\coalitions},\Gamma)$. While he does not prove any results, he observes on p. 292 (Example 6.3) that in a particular common interest game, there is a unique nondiscriminating CSSB in $(\gamma^{\coalitions},\Gamma)$ corresponding to  the unique Pareto-optimal action profile.

His observation suggests a connection to Theorem 1 in \cite{aliliu2026main}: as we illustrate in Table 3(A) of that paper, every Perfect Coalitional Equilibrium (PCE) plays the Pareto-optimal action profile in common interest games. 

In this note, we establish a general connection between nondiscriminating CSSB in \citeauthor{greenberg1989application}'s {coalitional repeated game situation} $(\gamma^{\coalitions},\Gamma)$ and our solution concept, PCE. We prove that for repeated strategic form games, the relationship between PCE and  nondiscriminating CSSB in the coalitional repeated game $(\gamma^{\coalitions},\Gamma)$ precisely parallels that between SPNE and nondiscriminating CSSB in the ``Nash repeated'' game $(\gamma,\Gamma)$. We prove the following result as the coalitional analogue of  \cref{thm:greenberg-main} quoted above.

\begin{theorem} \label{thm:greenberg-main-analogue}
Let $PCEP$ denote the collection of equilibrium paths in all PCEs, and let $\sigma^{PC}$ be the SB defined by $\sigma^{PC}(G)=PCEP$  {for all } $G\in \Gamma$. Then $\sigma^{PC}$ is the unique maximal nondiscriminating CSSB for the coalitional repeated game situation $(\gamma^{\coalitions},\Gamma)$.
\end{theorem}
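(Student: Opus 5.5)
The plan mirrors Greenberg's proof of \cref{thm:greenberg-main}, with SPNE replaced by PCE and individual deviations replaced by coalitional ones. First I recall what a nondiscriminating CSSB $\sigma$ in $(\gamma^{\coalitions},\Gamma)$ requires. Nondiscrimination means $\sigma(G)$ is the same set of paths $P$ for every position $G$, up to the history-independence of continuation games. Conservative stability means $P$ is internally and externally stable with respect to $\gamma^{\coalitions}$. Concretely, a path $\pi$ lies in $P$ if and only if there is no coalition $S$ and period $t$ with the following property: $S$ can switch its actions at $t$ so that \emph{every} continuation path in $P$ that may follow (conservatively, the worst for $S$) gives every member of $S$ strictly more than following $\pi$ from $t$ on. I would first restate this as a fixed-point characterization of $P$. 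A path is in $P$ iff, at every $t$ and for every coalition $S$ and every deviation $a_S$, some member of $S$ weakly prefers $\pi$'s continuation to some continuation path in $P$ after the deviation.

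Second, I would recall the characterization of PCE paths from \cite{aliliu2026main}. Since PCE is a recursive notion, with coalitional deviations deterred by continuation PCEs, I expect its path set to satisfy a self-generation property. Namely, $PCEP$ is the largest set $Q$ of paths such that for each $\pi\in Q$, each $t$, each $S$ and each deviation $a_S$, there is a continuation path $\pi'\in Q$ after which some member of $S$ is not strictly better off. I would prove this, or invoke it if it is stated in \cite{aliliu2026main}. One inclusion is immediate: each PCE supplies punishment paths that are themselves PCE paths. For the other, from any such $Q$ I would build a strategy profile. It follows a chosen path, and after any coalitional deviation it switches to the path $\pi'$ selected for that deviation. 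The delicate point is that PCE may also need to handle deviations inside punishment paths and consistency requirements across coalitions, for instance the condition that the selected continuation is itself robust. These are handled because each selected $\pi'$ lies in $Q$ and so is itself supported.

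With this in hand the two claims follow. (i) For any nondiscriminating CSSB $\sigma$ with path set $P$, internal stability gives exactly the self-generation property for $P$. Maximality of $PCEP$ among self-generating sets then gives $P\subseteq PCEP$. (ii) $\sigma^{PC}$ is a CSSB. Internal stability comes from self-generation of $PCEP$. External stability requires that any path \emph{not} in $PCEP$ admits a profitable coalitional deviation against all $PCEP$ continuations. I would argue this by contradiction. If a path $\pi\notin PCEP$ had no such objection, then $PCEP\cup\{\pi\}$ would be self-generating. The existing continuation paths in $PCEP$ deter deviations from $\pi$, and every continuation of $\pi$ must be handled as well. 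So I would add the whole set of continuations of $\pi$ and iterate, and maximality then yields a contradiction. Uniqueness of the maximal element then follows, because every nondiscriminating CSSB is contained in $\sigma^{PC}$, which is itself one.

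The main obstacle is matching the exact definitions. Greenberg's conservative dominance ranges over all paths in $\sigma$ at the induced position, and $\gamma^{\coalitions}$ specifies which deviations are allowed. PCE in \cite{aliliu2026main} may impose extra requirements, for example strictness or a tie-breaking convention about who can block. These must translate exactly into "some member weakly prefers some path in $P$". I would spend most of the effort on the equivalence between self-generation and PCE, and on the handling of weak versus strict inequalities in the external-stability step. That step is where I expect the argument could break and need a modified closure construction.
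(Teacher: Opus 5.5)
Your argument is correct, and it is more direct than the paper's. The paper follows Greenberg's template. It passes to the closure of a CSSB (\cref{prop:5-3}) and proves $PCEP$ compact (\cref{thm:compact}). It then uses compactness to select worst paths $z(i|\sigma)$, so that non-domination becomes a condition against one punishment per player (\cref{prop:5-4}). Finally it applies an Abreu-style optimal penal code for PCE (\cref{thm:6-1}).

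You instead use the identity $u_i(H)(y)-u_i(G)(x)=b(G)\bigl[U_i(x;\zeta^C_\tau;y)-U_i(x)\bigr]$. It shows that, for a nondiscriminating $\sigma$ with common value $P\neq\varnothing$, $X\setminus CDOM(\sigma,G)=\Psi(P)$ at every $G$. Hence internal stability is $P\subseteq\Psi(P)$ and external stability is $\Psi(P)\subseteq P$. So nondiscriminating CSSBs are exactly the nonempty fixed points of $\Psi$. The paper proves the needed self-generation facts anyway as Lemmas~\ref{lemma:monotone}, \ref{lemma:enforceability} and~\ref{lemma:enforceabilityPCEP}, and none of them uses compactness. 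With these, $PCEP$ is the largest such fixed point.

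Your route therefore needs no closures, argmins or compactness at all. The paper's route buys by-products of independent interest: compactness of $PCEP$ and an optimal penal code with worst punishments for PCE. It also gives a line-by-line parallel with Greenberg's Theorem~6.2.

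The obstacles you anticipate do not arise:
\begin{itemize}
\item In the external-stability step, $\Psi$ already quantifies over deviations at every period along $\pi$. So an undominated $\pi$ lies in $\Psi(PCEP)$, and $PCEP\cup\{\pi\}\subseteq\Psi(PCEP)\subseteq\Psi(PCEP\cup\{\pi\})$ by monotonicity. There is no need to add the tails of $\pi$ or to iterate.
\item There is no weak/strict mismatch. The negation of ``every member of $C$ strictly prefers every $y\in P$'' is exactly ``some member of $C$ weakly prefers $\pi$ to some $y\in P$,'' which is the inequality in $\Psi$. So no closure construction is needed.
\end{itemize}

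One caveat: identifying the undominated paths with $\Psi(P)$ requires $P\neq\varnothing$. If $P=\varnothing$, nothing is dominated and external stability fails. This is automatic for a CSSB. For $\sigma^{PC}$, however, your part (ii) tacitly assumes $PCEP\neq\varnothing$, exactly as the paper does when it takes argmins over $PCEP$.
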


To prove \cref{thm:greenberg-main-analogue}, we first generalize \citeauthor{greenberg1989application}'s {Propositions 5.3 and 5.4} from his repeated game Nash situation $(\gamma,\Gamma)$ to his coalitional repeated game situation $(\gamma^{\coalitions},\Gamma)$.  Next, we turn to PCE and show that the set of equilibrium paths is compact and extend \cite{abreu1988opc}'s optimal penal code for SPNE to PCE; neither of these results is contained in \cite{aliliu2026main}. Using these steps, we then prove \cref{thm:greenberg-main-analogue}. 
\medskip

In \cref{section: nash situation,section: coalition situation}, we offer a detailed account of \citeauthor{greenberg1989application}'s repeated game Nash situation $(\gamma,\Gamma)$ and coalitional repeated game situation $(\gamma^{\coalitions},\Gamma)$. \Cref{section: proof main} proves  \cref{thm:greenberg-main-analogue}. \Cref{section: proof supp} presents the intermediate results together with their proofs. Throughout, we stay as close as possible to \citeauthor{greenberg1989application}'s notation, making only minimal changes when they improve exposition.

\section{Repeated Game Nash Situation} \label{section: nash situation}
Let $N$ be a finite set of players, $Z=\times_{i\in N}Z^i$ be the set of stage-game action profiles, $\{u_i\}_{i\in N}$ be the stage game payoff functions, and $X = Z^{\infty}$ denote the set of feasible paths in the repeated game. A  {repeated game Nash situation} is a pair $(\gamma,\Gamma)$, where $\Gamma$ is the set of \emph{positions} describing continuation situations (e.g.\ a subgame or history), and $\gamma$ is the \emph{inducement correspondence}: for each set of players $S\subseteq N$, position $G\in \Gamma$ and continuation path $x\in X$, $\gamma(S|G,x)$ denotes the positions coalition $S$ can induced from the position $G$ when the continuation path $x$ is offered.

Formally, let $Z^t$ denote the set of $t$-tuples of action profiles, then the set of positions $\Gamma \equiv \cup_{t=0}^\infty Z^t$ is the set of histories. For each position $G= (z_1,z_2,\ldots,z_t)\in \Gamma$, define $a_i(G) = (1-\delta)\sum_{\tau=1}^t \delta^{\tau-1}u_i(z_{\tau}) $ and $b(G) =  \delta^t$. Let $U_i:X\rightarrow \Re$ denote player $i$'s discounted payoff function given by $U_i(x)= (1-\delta)\sum_{\tau=1}^{\infty} \delta^{\tau-1} u_i(x_{\tau})$ for all $x = (x_1,x_2,\ldots)\in X= Z^\infty$. The set of positions $\Gamma$ then satisfies that for every $G\in \Gamma$,
\[
 N(G) = N;\quad X(G) = X; \; \text{ and } u_i(G) = a_i(G) + b(G) U_i \text{ for all } i\in N.
\]
In the expressions above, $N(G)$ is the set of players at position $G$, $X(G)$ is the set of feasible outcomes at position $G$, and $u_i(G):X\rightarrow \Re$ is player $i$'s payoff function over outcomes at position $G$. In a repeated game Nash situation, the set of players and feasible outcomes are constant across all positions; in \citeauthor{greenberg1989application}'s language,  player's payoff from continuation path $x\in X$ at position $G= (z_1,z_2,\ldots,z_t)$ is given by $u_i(G)(x) = a_i(G) + b(G) U_i(x)$; we will follow this notation in this note.

In a repeated game Nash situation, the inducement correspondence $\gamma$ allows only individual deviations and coalitions are not allowed to form. For each positive integer $\tau$, position $G\in \Gamma$, path $x\in X$, player $i\in N$, and player $i$'s action in period $\tau$, $\zeta^i_\tau \in Z^i$, let $H=(G|x;\zeta^i_{\tau})\in \Gamma$ denote the position $H = (G,x_1,x_2,\ldots, x_{\tau-1},\zeta_\tau)$, where $\zeta_{\tau} \in Z^N$ is the action profile satisfying $\zeta^j_{\tau} = x^j_{\tau}$ for all $j\ne i$. For every $G\in \Gamma$ and $x\in X$, the inducement correspondence satisfies
\begin{align*} 
\gamma(\{i\} | G, x)= & \left\{\left(G | x ; \zeta_\tau^i\right) | \tau \in\{1,2, \ldots\} \text { and } \zeta_\tau^i \in Z^i\right\} \text{ for all } i\in N \\ 
& \gamma(S | G, x)=\varnothing \quad \text { for all }  S\subseteq N \text{ with } \quad|S|>1.
\end{align*}

A \emph{standard of behavior} (SB) is a mapping $\sigma: \Gamma \to 2^X$ that assigns to each position $G$ a set of continuation paths $\sigma(G) \subseteq X$. Given an SB $\sigma$ and a position $G$, a path $x\in X$ is said to be \emph{conservatively dominated} at $G$ (relative to $\sigma$) if there exists a coalition  $S\subseteq N$ and $H\in \gamma(S|G,x)$ such that $\sigma(H)\neq \varnothing$ and $u_i(H)(y) > u_i(G)(x)$ for all $i\in S$ and $y\in \sigma(H)$. Note that in the repeated game Nash situation, since the inducement correspondence $\gamma$ allows only individual deviations, a path $x\in X$ is {conservatively dominated} if and only if there exists player $i\in N$ and $H\in \gamma(\{i\}|G,x)$ such that $\sigma(H)\neq \varnothing$ and $u_i(H)(y) > u_i(G)(x)$ for all  $y\in \sigma(H)$.

Given an SB $\sigma$ and a position $G$, the \emph{conservative dominion} of $G$ relative to $\sigma$, denoted $CDOM(\sigma,G)$,  is the set of paths that are conservatively dominated at position $G$ (relative to $\sigma$). An SB $\sigma$ is:
\begin{itemize}
\item \emph{conservatively internally stable} if $\sigma(G) \cap CDOM(\sigma,G) = \varnothing$ for all  $G\in \Gamma$;
\item \emph{conservatively externally stable} if $X\backslash \sigma(G) \subseteq CDOM(\sigma,G)$
 for all  $G\in \Gamma$.
\end{itemize}
A {Conservative Stable Standard of Behavior}, or CSSB, is an SB that is both conservatively
internally and externally stable. An SB $\sigma$ is \emph{nondiscriminating} if $\sigma(G)=\sigma(H)$ for all  $G,H\in \Gamma$.

\section{Coalitional Repeated Game Situation} \label{section: coalition situation}

A \emph{coalitional repeated game situation} is a pair $(\gamma^{\coalitions},\Gamma)$ where $\Gamma$ is defined identically to the repeated game Nash situation in \cref{section: nash situation}, but the correspondence $\gamma^{\coalitions}$ allows not only individual but also coalitional deviations. In particular, for each positive integer $\tau$, $G\in \Gamma$, $x\in X$, $C \subseteq N$, and $\zeta^C_\tau \in Z^C$, let $H=(G|x;\zeta^C_{\tau})\in \Gamma$ denote the position $H = (G,x_1,x_2,\ldots, x_{\tau-1},\zeta_\tau)$, where $\zeta_{\tau} \in Z^N$ is the action profile satisfying $\zeta^j_{\tau} = x^j_{\tau}$ for all $j\notin C$. The inducement correspondence $\gamma^{\coalitions}$ is given by: for every $G\in \Gamma$, $x\in X$, and $C\subseteq N$,
\[
\gamma^{\coalitions}(C | G, x)=\left\{(G| x ; \zeta_\tau^C) : \tau \in\{1,2, \ldots\} \text{ and } \zeta_\tau^C \in Z^C\right\}.
\]

Given an SB $\sigma$, a path $x\in X$ is conservatively dominated at $G\in \Gamma$ if there exists a coalition $C\subseteq N$ and $H\in \gamma^{\coalitions}(C|G, x)$ such that $\sigma(H)\neq \varnothing$ and  $u_i(H)(y) > u_i(G)(x)$ for all $i\in C$ and $y\in \sigma(H)$. That is, a deviation is profitable for coalition $C$ if every member of $C$ strictly prefers every continuation at $\sigma(H)$ to $x$. By replacing $\gamma$ with $\gamma^{\coalitions},$ the conservative dominion of $G$ relative to $\sigma$, $CDOM(\sigma,G)$, can now be written as
\begin{align*}
 CDOM(\sigma,G) =  \Bigl\{& x\in X: \exists C\subseteq N, \tau\ge 1, \zeta^C_\tau\in Z^C
\text{such that } H=(G|x; \zeta_\tau^C) \text{ satisfies} \\
& \sigma(H)\neq\varnothing \text{ and } u_i(H)(y)>u_i(G)(x) \text{ for all } y\in \sigma(H) \text{ and } i\in C \Bigr\}.
\end{align*}
The notions of  conservative internal stability, conservative external stability, and CSSB extend directly from the repeated game Nash situation to the coalitional repeated game situation by adopting the definition of $CDOM$ for $\gamma^{\coalitions}$ instead of $\gamma$.

\section{Proof of \cref{thm:greenberg-main-analogue}} \label{section: proof main}
\subsection{Intermediate Results}
The proof of \cref{thm:greenberg-main-analogue} relies on four intermediate results. For convenience, we state them here without proof and defer their proofs to \cref{section: proof supp}. 
\medskip

Let $G^0$ be the position associated with the empty history, i.e. at the beginning of the game. The first result shows that the closure of a nondiscriminating CSSB in the coalitional repeated game situation is also a CSSB.

\begin{proposition} \label{prop:5-3}
Let $\sigma$ be a nondiscriminating conservative internally stable SB for the coalitional repeated game situation $(\gamma^{\coalitions}, \Gamma)$. For each $G\in \Gamma$, let $\tilde{\sigma}(G)$ be the closure of $\sigma(G)$. Then, the SB $\tilde{\sigma}$ is also a nondiscriminating conservative internally stable SB for $(\gamma^{\coalitions}, \Gamma)$. In addition, if $\sigma(G^0)\neq\varnothing$, then $\tilde{\sigma}(G^0)$ is also nonempty and compact.
\end{proposition}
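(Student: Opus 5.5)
Since $\sigma$ is nondiscriminating, write $\sigma(G)=A$ for all $G$, so $\tilde\sigma(G)=\bar A$ for all $G$; thus $\tilde\sigma$ is trivially nondiscriminating. Here the topology on $X=Z^\infty$ is the product topology, with $Z$ finite and discrete. It is compact and metrizable, and each $U_i$ is continuous because of discounting. The last claim is then immediate: if $A\neq\varnothing$, then $\bar A$ is a nonempty closed subset of the compact space $X$, hence compact. The work is to show internal stability of $\tilde\sigma$, i.e.\ that no $x\in\bar A$ lies in $CDOM(\tilde\sigma,G)$ for any $G$.

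I will argue by contradiction. Suppose $x\in\bar A$, and at some position $G$ there are $C$, $\tau$ and $\zeta^C_\tau$ such that $H=(G|x;\zeta^C_\tau)$ satisfies $\bar A\neq\varnothing$ and
\[
u_i(H)(y)>u_i(G)(x)\quad\text{for all } y\in\bar A,\ i\in C.
\]
Because $\bar A$ is compact and $u_i(H)$ is continuous, the minimum $m_i=\min_{y\in\bar A}u_i(H)(y)$ is attained. Hence there is a uniform gap $\varepsilon>0$ with $m_i-u_i(G)(x)\ge\varepsilon$ for every $i\in C$. This compactness step, which converts a pointwise strict inequality over $\bar A$ into a uniform one, is where I expect the real content to lie.

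Next I approximate $x$ by $x^n\in A$ with $x^n\to x$ in the product topology. For $n$ large, $x^n$ agrees with $x$ in periods $1,\dots,\tau$, so the induced position $H^n=(G|x^n;\zeta^C_\tau)$ equals $H$ exactly. Also $u_i(G)(x^n)\to u_i(G)(x)$ by continuity. So for large $n$ and every $y\in A\subseteq\bar A$,
\[
u_i(H^n)(y)=u_i(H)(y)\ge m_i>u_i(G)(x^n)\quad\text{for all } i\in C,
\]
and $\sigma(H^n)=A\neq\varnothing$ (nonempty since $\bar A\neq\varnothing$). Therefore $x^n\in A\cap CDOM(\sigma,G)$, contradicting the internal stability of $\sigma$.

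The delicate points are two. First, using the minimum over $\bar A$ rather than over $A$ keeps the inequality strict; nondiscrimination guarantees that the same set $A$ is offered after $H$. Second, the deviation position must be exactly preserved, which the agreement of $x^n$ with $x$ on the first $\tau$ periods provides.
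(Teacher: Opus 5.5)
Your proof is correct and is essentially the paper's argument in contrapositive form. For each approximant $x[m]\in\sigma(G^0)$, the paper takes the blocking pair $i_m\in C$, $y[m]\in\sigma(H^m)$ supplied by internal stability of $\sigma$, and passes to a limit $(i^*,y^*)$ with $y^*\in\tilde\sigma(H)$. You instead use the same compactness of $\bar A$ up front to get a uniform margin $\varepsilon$, which carries the domination over to a nearby $x^n\in A$; you also treat an arbitrary position $G$ directly, whereas the paper checks only $G^0$.

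The one point to adjust is your assumption that $Z$ is finite. The paper never makes it; it uses only compactness of $X$ and continuity of payoffs. Without finiteness, $H^n=H$ can fail. However, $b(H^n)=b(H)=b(G)\delta^{\tau}$ and $a_i(H^n)\to a_i(H)$, so $u_i(H^n)(y)-u_i(H)(y)=a_i(H^n)-a_i(H)\to 0$ uniformly in $y$, and your margin $\varepsilon$ absorbs this error.
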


Let $x, y \in X$, let $\tau$ be a positive integer, let $i \in N$, and let $\zeta_\tau^C \in Z^C$. Define
$$
\left(x ; \zeta_\tau^C ; y\right) \equiv \left(x_1, x_2, \ldots, x_{\tau-1}, \zeta_\tau, y_1, y_2, \ldots\right),
$$
where $\zeta_\tau^j = x_\tau^j$ for all $j \notin C$. This denotes the path obtained by following $x$ through period $\tau-1$, then having the players in $C$ deviate in period $\tau$ to $\zeta_\tau^C$, and then following the path $y$ thereafter. The next result provides an optimal-penal-code-like characterization of nondiscriminating CSSB.

\begin{proposition}
\label{prop:5-4}
Let $\sigma$ be a nondiscriminating SB for the coalitional repeated game
situation $(\gamma^{\coalitions},\Gamma)$, and suppose that for every
$i\in N$ the set $\argmin\{U_i(x):x\in \sigma(G^0)\}$ is nonempty. For each $i\in N$, choose
\[
z(i| \sigma)\in \argmin\{U_i(x):x\in \sigma(G^0)\}.
\]
Then a path $y\notin CDOM(\sigma,G^0)$ if and only if for every nonempty coalition $C\subseteq N$, every period $\tau\ge 1$, and every $\zeta^C_\tau\in Z^C$, there exists some player $i\in C$ such that $U_i(y)\ge U_i\bigl(y;\zeta^C_\tau; z(i| \sigma)\bigr)$.
\end{proposition}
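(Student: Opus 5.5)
The plan is to unpack the definition of $CDOM(\sigma,G^0)$ and use two facts: payoffs are affine in the continuation value, and $\sigma$ is nondiscriminating, so $\sigma(H)=\sigma(G^0)$ for every position $H$. Fix a path $y$, a nonempty coalition $C$, a period $\tau\ge 1$, and $\zeta^C_\tau\in Z^C$, and set $H=(G^0|y;\zeta^C_\tau)$. This is the history $(y_1,\ldots,y_{\tau-1},\zeta_\tau)$ of length $\tau$, so $b(H)=\delta^\tau$. The first step is the identity
\[
u_i(H)(w)=a_i(H)+\delta^\tau U_i(w)=U_i\bigl(y;\zeta^C_\tau;w\bigr)\quad\text{for all } w\in X,
\]
together with $u_i(G^0)(y)=U_i(y)$, since $a_i(G^0)=0$ and $b(G^0)=1$. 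Both follow directly from the definitions of $a_i$, $b$ and $U_i$ in \cref{section: nash situation}.

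The second step characterizes when the deviation $(C,\tau,\zeta^C_\tau)$ witnesses conservative domination. By nondiscrimination, $\sigma(H)=\sigma(G^0)$, which is nonempty because the argmin sets are assumed nonempty. So the witnessing condition is: $U_i(y;\zeta^C_\tau;w)>U_i(y)$ for all $i\in C$ and all $w\in\sigma(G^0)$. Because $w\mapsto U_i(y;\zeta^C_\tau;w)$ is a positive affine transformation of $U_i(w)$, its infimum over $\sigma(G^0)$ is attained at $z(i|\sigma)$. Hence the condition holds for all $w$ if and only if $U_i(y;\zeta^C_\tau;z(i|\sigma))>U_i(y)$. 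This equivalence is the one place the argmin hypothesis is essential: without attainment, a strict inequality at every $w$ need not survive at an infimum.

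Concretely, the two directions are as follows. If the inequality holds at the minimizer, then for any $w\in\sigma(G^0)$ we have $U_i(w)\ge U_i(z(i|\sigma))$, so $U_i(y;\zeta^C_\tau;w)\ge U_i(y;\zeta^C_\tau;z(i|\sigma))>U_i(y)$. Conversely, $z(i|\sigma)$ itself lies in $\sigma(G^0)$, so the condition for all $w$ gives the inequality at the minimizer.

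The final step is negation. By the second step, $y\in CDOM(\sigma,G^0)$ if and only if there exist $C$, $\tau$ and $\zeta^C_\tau$ with $U_i(y;\zeta^C_\tau;z(i|\sigma))>U_i(y)$ for all $i\in C$. Negating this gives exactly the stated condition: for every $(C,\tau,\zeta^C_\tau)$ some $i\in C$ has $U_i(y)\ge U_i(y;\zeta^C_\tau;z(i|\sigma))$.

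Two details need checking. First, the empty coalition must be excluded. In $CDOM$, $C=\varnothing$ would make the domination condition vacuously true, so the statement's restriction to nonempty $C$ matches reading the definition of $\gamma^{\coalitions}$ as ranging over nonempty coalitions. I will state this convention explicitly. Second, the $z(i|\sigma)$ may differ across members of $C$. That is harmless, because the domination condition is imposed player by player over all of $\sigma(H)$.
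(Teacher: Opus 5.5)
Your proposal is correct and follows essentially the same route as the paper. Both arguments unpack $CDOM(\sigma,G^0)$, use nondiscrimination to replace $\sigma(H)$ by $\sigma(G^0)$, reduce the condition ``for all $w\in\sigma(G^0)$'' to the minimizer $z(i|\sigma)$, substitute $u_i(H)(w)=U_i(y;\zeta^C_\tau;w)$ and $u_i(G^0)(y)=U_i(y)$, and negate; your explicit treatment of $\sigma(H)\neq\varnothing$ and of the exclusion of the empty coalition makes precise conventions the paper leaves tacit.
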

\medskip

The next result turns to PCE and shows that the set of equilibrium paths is compact in the product topology over $X = Z^{\infty}$.

\begin{proposition}\label{thm:compact}
The set $PCEP$ is compact.
\end{proposition}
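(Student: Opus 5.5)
The plan is to get compactness from Tychonoff's theorem and then to prove that $PCEP$ is closed. Since $Z$ is finite, $X=Z^\infty$ is compact in the product topology, so closedness is all that remains. Discounting makes each $U_i$ continuous on $X$ in the product topology: two paths that agree for the first $T$ periods have payoffs within $2\delta^T\max|u_i|$ of each other. So the task is to take a sequence $x^n\in PCEP$ with $x^n\to x$ and show $x\in PCEP$. I will argue through a self-generation/maximality characterization of PCE paths rather than by building a limiting strategy profile directly.

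\emph{Step 1 (characterization).} For each $i$, set $v_i\equiv\inf\{U_i(w):w\in PCEP\}$. I will use the definition of PCE in \cite{aliliu2026main}: after any coalitional deviation, play continues according to some PCE, and the deviating coalition evaluates it conservatively. From that definition I expect to show that a path $y$ is a PCE path when the following holds for every $t\ge 0$, every nonempty $C\subseteq N$, every $\tau\ge 1$ and every $\zeta^C_\tau\in Z^C$. Write $y^{(t)}$ for the continuation of $y$ from period $t+1$. Then some $i\in C$ satisfies
\[
U_i\bigl(y^{(t)}\bigr)\ \ge\ (1-\delta)\sum_{s=1}^{\tau-1}\delta^{s-1}u_i\bigl(y^{(t)}_s\bigr)+(1-\delta)\delta^{\tau-1}u_i(\zeta_\tau)+\delta^{\tau}v_i .
\]
I also expect every PCE path to satisfy this condition. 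In addition I will show that $PCEP$ is the largest set $W$ with the following property: each path in $W$ satisfies the condition when $v$ is replaced by the worst values over $W$, and those worst values can be approached by paths in $W$. This mirrors the argument behind \cref{prop:5-4} and \cite{abreu1988opc}.

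\emph{Step 2 (the condition passes to limits).} Fix $t$, $C$, $\tau$ and $\zeta^C_\tau$. For each $n$ there is an $i_n\in C$ for which the displayed inequality holds along $x^n$. Since $C$ is finite, some single $i$ works along a subsequence. Both sides of the inequality are continuous in the path, with $v_i$ held fixed, so the weak inequality survives the limit and holds for $x$. Hence $x$ satisfies the condition for the same punishment values $v$.

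\emph{Step 3 (closing the argument).} Let $W=\overline{PCEP}$. The infimum of $U_i$ over $W$ equals $v_i$ by continuity, and it is attained on $W$ because $W$ is compact. Step 2 therefore shows that $W$ is self-generating in the sense of Step 1, with the worst punishments $z(i)\in W$ actually attained. To conclude $W\subseteq PCEP$ from maximality, I must check that a path in $W$ together with these punishment paths can be assembled into a PCE. Concretely, I would build a simple strategy profile: follow $y$, and after a deviation by $C$ switch to the punishment path $z(i)$ of a player $i\in C$ who does not gain. The conditions at every history along $y$ and along each $z(i)$ are exactly the inequalities verified in Step 2.

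The hardest part will be Step 1 and its use in Step 3. The worry is whether the conservative, coalition-wide strict-dominance requirement in PCE really reduces to the player-by-player weak inequality against worst punishments. In particular, the simple profile chooses a single member's punishment after a deviation, whereas conservative evaluation lets the coalition face any continuation. I would first try to show that requiring every member of $C$ to gain strictly against every continuation PCE is equivalent to every member gaining strictly against their own infimum. If that fails, I would fall back on a direct limit argument on strategy profiles, using a diagonal subsequence over the countably many histories.
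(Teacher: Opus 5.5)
Your proof is correct, but it takes a different route from the paper.

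\textbf{The paper's route.} It runs an APS-style iteration. It shows that $\Psi$ is monotone and maps compact sets to compact sets. It then sets $X^0=X$ and $X^{k+1}=\Psi(X^k)$, and proves $PCEP=\bigcap_k X^k$. One inclusion comes from $PCEP\subseteq\Psi(PCEP)$. The other comes from showing that the intersection is self-generating, by taking limits of whole families of punishment paths.

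\textbf{Your route.} You prove closedness directly, and your steps hold up:
\begin{itemize}
\item Every PCE path satisfies the incentive constraints against the scalar worst values $v_i$, since each continuation PCE path gives $i$ at least $v_i$.
\item These weak inequalities survive limits.
\item Compactness of $W=\overline{PCEP}$ gives minimizers $z(i)\in W$ attaining $v_i$.
\item The simple profile then shows $W\subseteq\Psi(W)$, so $W\subseteq PCEP$ by the self-generation lemma (\cref{lemma:enforceability}).
\end{itemize}

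\textbf{Comparison.} Your route is shorter and yields the optimal-penal-code structure of \cref{thm:6-1} along the way. The same shortcut is in fact available inside the paper's own framework: $PCEP\subseteq\Psi(PCEP)\subseteq\Psi(\overline{PCEP})$, and the latter set is closed by \cref{lemma:compact}. What the paper's iteration buys in exchange is the outer-approximation characterization $PCEP=\bigcap_k\Psi^k(X)$.

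\textbf{Three small remarks.}
\begin{itemize}
\item The sufficiency half of your Step 1, stated with the infimum $v_i$ over $PCEP$, cannot be used before compactness is known, because the infimum need not be attained. Your Step 3 correctly avoids this by working with minimizers over $W$.
\item The worry you flag does not arise. As the paper uses PCE, a coalitional deviation is deterred once one member of $C$ does not strictly gain against the single continuation the plan prescribes; this is exactly how \cref{lemma:enforceability} and the ``if'' part of \cref{thm:6-1} are verified. Evaluating a deviation against every continuation is a feature of CSSB, not of PCE.
\item You should set aside the trivial case $PCEP=\varnothing$ at the outset.
\end{itemize}
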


The next and final intermediate result obtains an optimal penal code for PCE, analogous to \cite{abreu1988opc}'s finding for subgame perfect Nash equilibria.

\begin{proposition} \label{thm:6-1}
A path $x[0] \in X$ belongs to $PCEP$ if and only if there exists a family of paths
$\{x[i]\}_{i\in N} \subseteq X$ such that for all $k\in \{0\}\cup N$, all
$\tau \ge 1$, all coalitions $C \subseteq N$, and all $\zeta^C_{\tau}\in Z^C$,
there exists $j \in C$ such that $U_j(x[k]) \ge U_j(x[k]; \zeta^C_{\tau}; x[j])$.
\end{proposition}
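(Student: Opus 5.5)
The plan is to prove the two directions separately. The ``only if'' direction combines \cref{thm:compact} with a worst-punishment argument. The ``if'' direction verifies a simple-strategy construction in the spirit of \cite{abreu1988opc}. Throughout, I will use two facts from \cite{aliliu2026main}, which I am assuming rather than checking here. First, the continuation of a PCE after any history, on or off path, is itself a PCE, so its path lies in $PCEP$. Second, a profile is a PCE if and only if, at every history, no coalition $C$ has a one-period deviation after which every member of $C$ strictly gains given the continuation the profile prescribes. If \cite{aliliu2026main} imposes the no-deviation requirement on multi-period coalitional deviations rather than one-shot ones, I would first establish this one-shot reduction. The argument would be the usual truncation argument: discounting makes payoffs continuous in the product topology, so a profitable finite deviation can be truncated to its last deviation stage, which yields a profitable one-shot deviation.

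\emph{Only if.} Let $x[0]\in PCEP$ be the path of a PCE $s$. By \cref{thm:compact}, $PCEP$ is compact, and each $U_i$ is continuous in the product topology. So for each $i\in N$ I can choose $x[i]\in\argmin\{U_i(x):x\in PCEP\}$. Fix $k\in\{0\}\cup N$, a coalition $C$, a period $\tau$, and $\zeta^C_\tau$. Since $x[k]\in PCEP$, it is the path of some PCE $s^k$. Consider the deviation by $C$ in period $\tau$ of $x[k]$. After this deviation, $s^k$ prescribes some continuation path $y$, and $y\in PCEP$ by the first fact. The PCE condition at the history $(x[k]_1,\ldots,x[k]_{\tau-1})$ gives some $j\in C$ with $U_j(x[k])\ge U_j(x[k];\zeta^C_\tau;y)$. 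Since $U_j(y)\ge U_j(x[j])$ and $U_j(x;\zeta;\cdot)$ is increasing in the continuation payoff (it enters with weight $\delta^\tau$), we get $U_j(x[k])\ge U_j(x[k];\zeta^C_\tau;x[j])$.

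\emph{If.} Given the family $\{x[i]\}$, I define a simple profile. The state is a pair $(k,t)$ meaning ``play $x[k]$ from its period $t$'', with initial state $(0,1)$. Suppose that in state $(k,t)$ the realized profile $\zeta$ differs from $x[k]_t$. Let $D$ be the set of players whose actions differ. The hypothesis, applied to the coalition $D$ with the deviation placed at the corresponding period, provides some $j\in D$. The profile then moves to state $(j,1)$.

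To verify the PCE condition, take any history, which puts play in some state $(k,t)$, and any coalition $C$ with a one-shot deviation $\zeta^C$. Let $D\subseteq C$ be the players who actually change their action; the case $D=\varnothing$ is trivial. The profile then switches to $x[j]$ with $j\in D\subseteq C$. The hypothesis gives $U_j(x[k])\ge U_j(x[k];\zeta;x[j])$ for the deviation at period $t$. Both sides share the first $t-1$ terms, so after subtracting them and dividing by $\delta^{t-1}$ this is exactly the statement that $j$ weakly prefers the continuation from period $t$ to the deviation. Hence $C$ has no deviation in which all members strictly gain. By the one-shot characterization the profile is a PCE with path $x[0]$, so $x[0]\in PCEP$.

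\emph{Main obstacle.} I expect the main difficulty to be the link to the exact definition of PCE in \cite{aliliu2026main}. The direct step above relies on the one-shot reduction to multi-period coalitional deviations and on the requirement that continuations after deviations be PCE, rather than on an arbitrary or ``conservative'' evaluation. I would settle this first. If PCE evaluates deviations recursively using the set of PCE continuations, the ``if'' direction needs an additional self-generation argument: the set $\{x[k]\}$, together with all of its shifts, is shown to be contained in $PCEP$ by the maximality, or largest-fixed-point, property of the PCE path set.
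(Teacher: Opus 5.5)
Your proposal is correct and follows the paper's proof. For ``only if'' you pick worst-punishment paths from the compact set $PCEP$ (\cref{thm:compact}), and for ``if'' you use an $(n+1)$-path simple profile; your switching rule based on the set $D$ of players who actually deviate is a useful refinement, since the paper's index $i(k,\tau,C,\zeta^C_\tau)$ depends on a coalition $C$ that the history does not reveal. One caution about your fallback: the truncation argument for multi-period deviations does not carry over to coalitions, because different members of $C$ may gain at different stages, so neither the last-stage deviation nor the shortened one need be strictly profitable for all of $C$. It is not needed, however, because the paper's argument, like its operator $\Psi$ and the dominion $CDOM$, only uses single-period coalitional deviations followed by the prescribed continuation.
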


\subsection{Using the Intermediate Results to Prove \Cref{{thm:greenberg-main-analogue}}}
Since $\sigma^{PC}(G)=PCEP$ for all $G\in \Gamma$, $\sigma^{PC}$ is a nondiscriminating SB. We first prove that $\sigma^{PC}$ is a CSSB by proving conservative internal and external stability. 

For internal stability, fix $x[0]\in \sigma^{PC}(G^0)=PCEP$. Since $PCEP$ is compact by \cref{thm:compact} and each $U_i$ is continuous in the product topology, for every $i\in N$ there exists $z(i|\sigma^{PC})\in \argmin\{U_i(x):x\in PCEP\}$. By \cref{thm:6-1}, there exists a family of paths $\{x[i]\}_{i\in N}\subseteq PCEP$ such that for all $\tau\ge 1$, all coalitions $C\subseteq N$, and all $\zeta_\tau^C\in Z^C$, there exists $j\in C$ such that $U_j(x[0])\ge U_j(x[0];\zeta_\tau^C;x[j])$. Since by definition $U_j(x[0];\zeta_\tau^C;z(j|\sigma^{PC})) \le U_j(x[0];\zeta_\tau^C;x[j])$, we have
\begin{equation}\label{eq:1}
U_j(x[0])\ge U_j(x[0];\zeta_\tau^C;z(j | \sigma^{PC})).    
\end{equation}
By \cref{prop:5-4}, this implies $x[0]\notin CDOM(\sigma^{PC},G^0)$. Since $x[0]\in \sigma^{PC}(G^0)$ was arbitrary, $\sigma^{PC}$ is conservatively internally stable.

For conservative external stability, fix $x[0]\in X\backslash \sigma^{PC}(G^0)=X\backslash PCEP$. We will prove that $x[0]\in CDOM(\sigma^{PC},G^0)$. Suppose, toward a contradiction, that $x[0]\notin CDOM(\sigma^{PC},G^0)$. \cref{prop:5-4} then implies that for every nonempty coalition $C\subseteq N$, every period $\tau\ge 1$, and every $\zeta_\tau^C\in Z^C$, there exists $j\in C$ such that
\begin{equation}\label{eq:2}
    U_j(x[0])\ge U_j\bigl(x[0];\zeta_\tau^C;z(j|\sigma^{PC})\bigr).
\end{equation}
Now define $x[i]:=z(i|\sigma^{PC})$ for all $i\in N$, so $\{x[i]\}_{i\in N}\subseteq PCEP$. Using arguments similar to those leading to \eqref{eq:1}, this implies that for every $i\in N$, every nonempty coalition $C\subseteq N$, every period $\tau\ge 1$, and every $\zeta_\tau^C\in Z^C$, there exists $j\in C$ such that
\begin{equation}\label{eq:3}
U_j(x[i])\ge U_j\bigl(x[i];\zeta_\tau^C;z(j| \sigma^{PC})\bigr)
\end{equation}
Combining \eqref{eq:2} and \eqref{eq:3} and observing $z(j|\sigma^{PC}) = x[j]$, this shows that for all $k\in \{0\}\cup N$, all nonempty coalitions $C\subseteq N$, all periods $\tau\ge 1$, and all $\zeta_\tau^C\in Z^C$, there exists $j\in C$ such that $U_j(x[k])\ge U_j(x[k];\zeta_\tau^C;x[j])$.
By \cref{thm:6-1}, it follows that $x[0]\in PCEP$, contradicting the choice of $x[0]$. Therefore, $x[0]\in CDOM(\sigma^{PC},G^0)$ for every $x[0]\in X\backslash \sigma^{PC}(G^0)$. Hence $\sigma^{PC}$ is conservatively externally stable. Therefore $\sigma^{PC}$ is a nondiscriminating CSSB.

It remains to prove maximality and uniqueness. Let $\sigma$ be any nondiscriminating CSSB for $(\gamma^{\coalitions},\Gamma)$. Since every CSSB is in particular conservatively internally stable, \cref{prop:5-3} implies that its closure $\tilde{\sigma}$ is a nondiscriminating conservative internally stable SB. Because $\tilde{\sigma}(G^0)$ is compact, for each $i\in N$ choose
\[
\tilde{z}(i| \tilde{\sigma})\in \argmin\{U_i(x):x\in \tilde{\sigma}(G^0)\}.
\]

Since $\tilde{\sigma}$ is conservative internally stable, \cref{prop:5-4} implies that for every $x\in \tilde{\sigma}(G^0)$, every nonempty coalition $C\subseteq N$, every $\tau\ge 1$, and every $\zeta_\tau^C\in Z^C$, there exists $j\in C$ such that 
\[
U_j(x)\ge U_j(x;\zeta_\tau^C;\tilde{z}(j| \tilde{\sigma})).
\]
Fix any $x[0]\in \tilde{\sigma}(G^0)$ and define $\tilde{x}[i]:=\tilde{z}(i| \tilde{\sigma})$ for all  $i\in N$, then $\{\tilde{x}[k]\}_{k\in \{0\}\cup N}\subseteq \tilde{\sigma}(G^0)$. The preceding inequality shows that the hypothesis of \cref{thm:6-1} is satisfied, so $\tilde{x}[0]\in PCEP=\sigma^{PC}(G^0)$. Since $\tilde{x}[0]\in \tilde{\sigma}(G^0)$ was arbitrary, we have $\tilde{\sigma}(G^0)\subseteq \sigma^{PC}(G^0)$. Because $\sigma(G^0)\subseteq \tilde{\sigma}(G^0)$, it follows that $\sigma(G^0)\subseteq \sigma^{PC}(G^0)$ as well. Since both $\sigma$ and $\sigma^{PC}$ are nondiscriminating, the same inclusion holds for every $G\in \Gamma$. Thus $\sigma^{PC}$ contains every nondiscriminating CSSB, so $\sigma^{PC}$ is maximal. Finally, if $\sigma$ is any maximal nondiscriminating CSSB, then the same inclusions apply and maximality of $\sigma$ would imply that $\sigma=\sigma^{PC}$. Therefore $\sigma^{PC}$ is the unique maximal nondiscriminating CSSB. \qed

\section{Proof of Intermediate Results} \label{section: proof supp}

\subsection{Proof of \cref{prop:5-3}}
Since $\sigma$ is nondiscriminating, $\sigma(G)=\sigma(H)$ for all $G,H\in\Gamma$; taking closures yields $\tilde{\sigma}(G)=\tilde{\sigma}(H)$, so $\tilde{\sigma}$ is also nondiscriminating. If $\sigma(G^0)\neq\varnothing$, then $\tilde{\sigma}(G^0)$ is nonempty, and since $X$ is compact, it is compact as well.

It remains to show conservative internal stability. Fix $x\in \tilde{\sigma}(G^0)$ and suppose, toward a contradiction, that $x\in CDOM(\tilde{\sigma},G^0)$. Then there exist a coalition $C\subseteq N$, a period $\tau\ge 1$, and $\zeta^C_\tau\in Z^C$  such that, letting $H=(G^0| x;\zeta^C_\tau)$, we have $\tilde{\sigma}(H)\neq\varnothing$ and
\begin{equation}
u_i(H)(y)>u_i(G^0)(x)  \quad \text{for all } y\in \tilde{\sigma}(H)\text{ and }\ i\in C.
\label{1}
\end{equation}

Since $x\in \tilde{\sigma}(G^0)$, by the definition of closure, there exists a sequence $\big\{x[m]\big\}_{m=1}^\infty \in \sigma(G^0)$ with $x[m]\to x$ in the product topology. For each $m$, define
$H^m=(G^0 \,|\, x[m];\zeta_\tau^C)$, then $H^m\in \gamma^{\coalitions}(C | G^0,x[m])$ for all $m$. Since $\sigma$ is conservatively internally stable and $x[m]\in \sigma(G^0)$, we have that for all $m\ge 1$, $x[m]\notin CDOM(\sigma,G^0)$. Therefore by the definition of $CDOM(\sigma,G^0)$, for each $m$, there exist $y[m]\in \sigma(H^m)$ and $i_m\in C$ such that
\begin{equation}
u_{i_m}(H^m)(y[m])\le u_{i_m}(G^0)(x[m]).
\label{2}
\end{equation}

Since $C$ is finite, by passing to a subsequence we may assume that $i_m=i^* \in C$ for all $m$. Because $\sigma$ is nondiscriminating, $\sigma(H^m)=\sigma(H)$ for all $m$, so $y[m]\in \sigma(H) \subseteq \tilde{\sigma}(H)$ for all $m$. Since $\tilde{\sigma}(H)$ is compact, we may also assume (after passing to a further subsequence) that $y[m]\to y^*$ for some $y^*\in \tilde{\sigma}(H)$. By continuity,
\[
u_{i^*}(H^m)(y[m])\to u_{i^*}(H)(y^*),
\qquad
u_{i^*}(G^0)(x[m])\to u_{i^*}(G^0)(x).
\]
Taking limits in \eqref{2} yields $u_{i^*}(H)(y^*)\le u_{i^*}(G^0)(x)$, which contradicts \eqref{1} since $y^*\in \tilde{\sigma}(H)$ and $i^*\in C$. Thus $x\notin CDOM(\tilde{\sigma},G^0)$. Since $x\in \tilde{\sigma}(G^0)$ was arbitrary, $\tilde{\sigma}$ is conservatively internally stable. \qed

\subsection{Proof of \cref{prop:5-4}}
Fix $x\in X$. By definition, $x\in CDOM(\sigma,G^0)$ if and only if there exist a coalition $C\subseteq N$, a period $\tau\ge 1$, and $\zeta^C_\tau\in Z^C$ such that, letting $H=(G^0|x;\zeta_\tau^C)$, we have $\sigma(H)\neq\varnothing$ and $u_j(H)(y)>u_j(G^0)(x)$ for all $y\in \sigma(H)$ and $j\in C$.

Since $\sigma$ is nondiscriminating, $\sigma(H)=\sigma(G^0)$. Furthermore, since $z(i| \sigma)\in \argmin\{U_i(x):x\in \sigma(G^0)\}$, we have $x\in CDOM(\sigma,G^0)$ if and only if there exist a coalition $C\subseteq N$, a period $\tau\ge 1$, and $\zeta^C_\tau\in Z^C$ such that $u_j(H)\bigl(z(j |\sigma)\bigr)>u_j(G^0)(x)$ for all $j\in C$.

Negating this statement, we obtain that $x\notin CDOM(\sigma,G^0)$ if and only if for every coalition $C\subseteq N$, every period $\tau\ge 1$, and every $\zeta^C_\tau\in Z^C$, there exists some player $i\in C$ such that $u_i(H)\bigl(z(i | \sigma)\bigr)\le u_i(G^0)(x)$, where $H=(G^0|x;\zeta^C_\tau)$. 

By definition, $u_i(H)\bigl(z(i| \sigma)\bigr)=U_i(x;\zeta^C_\tau; z(i| \sigma))$ and $u_i(G^0)(x)=U_i(x)$. Substituting these identities yields that $x\notin CDOM(\sigma,G^0)$ if and only if for every coalition $C\subseteq N$, every period $\tau\ge 1$, and every $\zeta^C_\tau \in Z^C$, there exists some player $i\in C$ such that $U_i(x)\ge U_i(x;\zeta^C_\tau; z(i | \sigma))$. \qed

\subsection{Proof of \cref{thm:compact}}

We make use of the self-generation arguments in \cite{abreu1990toward}. First, we define the self-generation operator on the space of continuation paths. Given a set of paths $Y\subseteq X$, we use $\Psi(Y)$ to denote the set of paths that can be enforced as on-path equilibrium behavior of a PCE using continuation paths in $Y$, and begin by proving a series of lemmas that are useful for establishing \cref{thm:compact}.

\begin{definition}
For any set $Y \subseteq X$, define
\begin{align*}
\Psi(Y)
 \equiv  \Big\{ x\in &X: \text{for every } C\subseteq N,\ \tau\ge 1,\text{ and } \zeta^C_\tau\in Z^C,\\
& \exists \big\{\underline{x}[i] \big\}_{i\in N}\subseteq Y  \text{ such that }  U_i(x)\ge U_i(x;\zeta^C_\tau;\underline{x}[i]) \text{ for some } i\in C 
\Big\}.
\end{align*}
\end{definition}

\begin{lemma}\label{lemma:monotone}
The operator $\Psi$ is monotone: if $Y\subseteq Y'$, then $\Psi(Y)\subseteq \Psi(Y')$.
\end{lemma}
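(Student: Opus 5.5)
The argument is a direct unpacking of the definition of $\Psi$: any witness family drawn from $Y$ is automatically a witness family drawn from $Y'$.

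Fix $Y\subseteq Y'$ and an arbitrary $x\in\Psi(Y)$; the goal is to show $x\in\Psi(Y')$. By the definition of $\Psi(Y')$, we must check that for every coalition $C\subseteq N$, every period $\tau\ge 1$, and every $\zeta^C_\tau\in Z^C$, there is a family $\{\underline{x}[i]\}_{i\in N}\subseteq Y'$ and some $i\in C$ with $U_i(x)\ge U_i(x;\zeta^C_\tau;\underline{x}[i])$.

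So fix such a triple $(C,\tau,\zeta^C_\tau)$. Since $x\in\Psi(Y)$, there is a family $\{\underline{x}[i]\}_{i\in N}\subseteq Y$ and a player $i\in C$ satisfying $U_i(x)\ge U_i(x;\zeta^C_\tau;\underline{x}[i])$. Because $Y\subseteq Y'$, this same family lies in $Y'$. The inequality involves only $x$, the deviation $\zeta^C_\tau$, the period $\tau$, and the punishment path $\underline{x}[i]$, none of which refers to the ambient set, so it continues to hold with the same player $i$. Hence the family witnesses the required condition for $(C,\tau,\zeta^C_\tau)$ relative to $Y'$.

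Since the triple was arbitrary, $x\in\Psi(Y')$, and therefore $\Psi(Y)\subseteq\Psi(Y')$. There is no real obstacle here. The only point needing care is the order of quantifiers: the witness family may depend on $(C,\tau,\zeta^C_\tau)$, and we reuse exactly that dependence, which is harmless because the inclusion $Y\subseteq Y'$ applies to each chosen family separately.
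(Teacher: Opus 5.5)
Your proposal is correct and matches the paper, which simply calls the lemma ``immediate from the definition.'' You have spelled that out accurately: for each $(C,\tau,\zeta^C_\tau)$, the same witness family and the same deterred player $i\in C$ that work for $Y$ also work for $Y'$, since the family lies in $Y'$ and the inequality does not refer to the set.
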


\begin{proof}
Immediate from the definition.
\end{proof}

\begin{lemma} \label{lemma:compact}
If $Y\subseteq X$ is compact, then $\Psi(Y)$ is compact.
\end{lemma}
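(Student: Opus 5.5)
Since $X=Z^\infty$ is compact in the product topology ($Z$ is finite), it suffices to show that $\Psi(Y)$ is closed. I will take a sequence $x[m]\in\Psi(Y)$ with $x[m]\to x$ and show $x\in\Psi(Y)$. Fix an arbitrary coalition $C$, period $\tau\ge 1$ and deviation $\zeta^C_\tau\in Z^C$. For each $m$, membership $x[m]\in\Psi(Y)$ gives a family $\{\underline{x}^m[i]\}_{i\in N}\subseteq Y$ and a player $i_m\in C$ with
\[
U_{i_m}(x[m])\ge U_{i_m}\bigl(x[m];\zeta^C_\tau;\underline{x}^m[i_m]\bigr).
\]

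Since $C$ is finite, I first pass to a subsequence along which $i_m=i^*$ is constant. Because $Y$ is compact and $N$ is finite, $Y^N$ is compact, so a further subsequence gives $\underline{x}^m[i]\to\underline{x}[i]\in Y$ for every $i\in N$. The required family for $x$ at this deviation is $\{\underline{x}[i]\}_{i\in N}\subseteq Y$, with witness $i^*$.

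To pass the inequality to the limit I need two facts. First, $U_{i^*}$ is continuous in the product topology, because discounting makes the tail contribution uniformly small: $\bigl|U_i(x)-U_i(x')\bigr|\le 2\delta^{T}\max|u_i|$ whenever $x,x'$ agree through period $T$. Second, the splicing map $(x,y)\mapsto(x;\zeta^C_\tau;y)$ is continuous for fixed $\tau$ and $\zeta^C_\tau$. Its first $\tau$ coordinates depend only on $x_1,\dots,x_\tau$, and specifically on $x_\tau^j$ for $j\notin C$. Its later coordinates are the coordinates of $y$. Convergence in the product topology is coordinatewise eventual equality, so the spliced paths $(x[m];\zeta^C_\tau;\underline{x}^m[i^*])$ converge to $(x;\zeta^C_\tau;\underline{x}[i^*])$. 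Taking limits then gives $U_{i^*}(x)\ge U_{i^*}(x;\zeta^C_\tau;\underline{x}[i^*])$.

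Since $C,\tau,\zeta^C_\tau$ were arbitrary, $x\in\Psi(Y)$, so $\Psi(Y)$ is closed and hence compact. The only delicate point is that the definition of $\Psi$ lets the punishment family depend on the deviation $(C,\tau,\zeta^C_\tau)$. The subsequence extraction must therefore be done separately for each fixed deviation, rather than uniformly across all of them. Doing it per deviation causes no difficulty, because the quantifier order in the definition only requires a family for each deviation separately. The extraction itself is routine, using finiteness of $C$ and compactness of $Y^N$.
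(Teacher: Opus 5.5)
Your proof is correct and follows the paper's argument: reduce to closedness in the compact $X$, fix the deviation $(C,\tau,\zeta^C_\tau)$, extract a subsequence along which the punishment families converge in $Y$ and the deterred player $i^*$ is constant, and pass the inequality to the limit by continuity. You only swap the order of the two extractions and spell out the continuity of $U_i$ and of the splice map, which the paper takes for granted. (Your description of product convergence as coordinatewise eventual equality assumes $Z$ is finite; if $Z$ is only compact with continuous $u_i$, the same argument works with coordinatewise convergence.)
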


\begin{proof}
Since $\Psi(Y)\subseteq X$ and $X$ is compact by Tychonoff's Theorem, it suffices to show that $\Psi(Y)$ is closed. Let $(x[m])_{m=1}^\infty \subseteq \Psi(Y)$ be a sequence converging to $x\in X$ in the product topology. Take an arbitrary coalition $C\subseteq N$, arbitrary period $\tau\ge 1$, and arbitrary $\zeta^C_\tau\in Z^C$. For each $m$, because $x[m]\in \Psi(Y)$, there exists a family $\{\underline{x}[m,i]\}_{i\in N}\subseteq Y$ such that  there exists some player $i\in C$ such that
\begin{equation}
    U_i(x[m])\ge U_i(x[m];\zeta^C_\tau;\underline{x}[m,i]).
\label{eq1}
\end{equation}
Since $Y$ is compact and $N$ is finite, after passing to a subsequence if necessary,
we may assume that for every $i\in N$, $\underline{x}[m,i]\to \underline{x}[i]$ for some  $\underline{x}[i]\in Y$.

We claim that the family $\{\underline{x}[i]\}_{i\in N}$ can be used as punishments to enforce $x\in \Psi(Y)$ against deviation $\zeta^C_\tau\in Z^C$ by coalition $C$ in period $\tau$. To see why, note that
By \eqref{eq1}, for each $m$ there exists
$i_m\in C$ such that
\begin{equation}\label{eq2}
U_{i_m}(x[m])\ge U_{i_m}(x[m];\zeta^C_\tau;\underline{x}[m,i_m]).    
\end{equation}
Since $C$ is finite, by passing to a further subsequence we may assume that
$i_m=i^*$ for all $m$, for some fixed $i^*\in C$. By continuity of payoffs and the
convergences
\[
x[m]\to x,\qquad \underline{x}[m,i^*]\to \underline{x}[{i^*}],
\]
taking limits in \eqref{eq2} yields $U_{i^*}(x)\ge U_{i^*}(x;\zeta^C_\tau;\underline{x}[{i^*}])$. Since $i^*\in C$, this verifies the enforceability conditions for $x$ against the coalition $C$ and deviation $\zeta^C_\tau$ in period $\tau$. Because $C$, $\tau$, and $\zeta_\tau$ were arbitrary, we conclude that $x\in \Psi(Y)$. Hence $\Psi(Y)$ is closed, and therefore compact.
\end{proof}

\begin{lemma} \label{lemma:enforceability}
If $Y\subseteq X$ satisfies $Y\subseteq \Psi(Y)$, then $Y\subseteq PCEP$.
\end{lemma}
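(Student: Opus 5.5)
The plan is the standard Abreu--Pearce--Stacchetti self-generation argument: from a self-generating set $Y$, build one strategy profile that, from every history, follows some path in $Y$; then check that it is a PCE whose equilibrium paths include every $y\in Y$. Throughout I use the recursive structure of PCE from \cite{aliliu2026main}. After any history, the continuation must itself be a PCE. A coalitional deviation is blocked if, under the continuation prescribed after the deviation, some member of the coalition does not strictly gain. Since $Y\subseteq\Psi(Y)$, for every $y\in Y$, every $C\subseteq N$, every $\tau\ge 1$ and every $\zeta^C_\tau\in Z^C$ I fix a punishment path $p(y,C,\tau,\zeta^C_\tau)\in Y$ and a member $i\in C$ such that
\[
U_i(y)\ge U_i\bigl(y;\zeta^C_\tau;p(y,C,\tau,\zeta^C_\tau)\bigr).
\]
Concretely, $p$ is the path $\underline{x}[i]$ from the family given by the definition of $\Psi$, for the member $i\in C$ who does not gain.

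\textbf{Step 1 (the automaton).} The state is a pair $(y,s)$ with $y\in Y$ and $s\ge 1$ the position reached along $y$. In state $(y,s)$ players play $y_s$. If the realized profile equals $y_s$, the state moves to $(y,s+1)$. Otherwise, let $D$ be the set of players whose actions differ from $y_s$, and let $\zeta^D$ be their realized actions. Then $s$ is the period $\tau$ of the deviation from $y$, and the state moves to $(p(y,D,s,\zeta^D),1)$. Started at $(y,1)$, this profile produces path $y$. Because every state is of the same form, the continuation profile after any history is again this automaton started at some path in $Y$. So the recursive requirement that continuations be PCEs reduces to showing that the automaton started at any $y'\in Y$ satisfies the period-by-period blocking condition.

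\textbf{Step 2 (blocking every coalitional deviation).} Take a history at which the state is $(y,s)$, and a coalition $C$ deviating from $y_s$ to some $\zeta^C$. Let $D\subseteq C$ be the members whose actions actually change. If $D=\varnothing$ the history is unchanged and there is nothing to check. Otherwise the realized continuation is $p(y,D,s,\zeta^D)$. By the choice of $p$, some $i\in D\subseteq C$ satisfies $U_i(y)\ge U_i(y;\zeta^D_s;p)$. Because the state at a history is determined by the path $y$ being followed, the payoff comparison at that history equals this comparison up to the positive affine transformation $a_i(G)+b(G)U_i$, so $i$ does not strictly gain. Here it matters that $\Psi$ quantifies over all $\tau$, not only $\tau=1$. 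A deviation in the middle of $y$ is therefore handled by the condition for $y$ itself, so $Y$ need not be closed under taking suffixes.

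\textbf{Main obstacle.} The delicate point is matching this one-shot, one-coalition check to the exact deviation notion in the PCE definition of \cite{aliliu2026main}. If PCE is defined by one-period coalitional deviations followed by PCE continuations, Step 2 together with the recursion of Step 1 is exactly the definition. If it allows multi-period coalitional deviations, I would first invoke, or reprove, a coalitional one-shot deviation principle. The argument: a profitable finite deviation by $C$ must have a last deviation stage; at that stage the stage-$C$ deviation is blocked by some member, and that member's comparison unravels backward. Infinite deviations are reduced to finite ones by continuity at infinity, since payoffs are discounted and bounded. A secondary issue is the attribution in Step 2: a coalition $C$ whose actual deviators form $D\subsetneq C$ is punished through $D$'s path. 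This works precisely because the blocking member found for $D$ also lies in $C$. Having verified the conditions, each $y\in Y$ is the path of a PCE, so $Y\subseteq PCEP$.
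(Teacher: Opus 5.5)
Your argument is correct and is essentially the paper's proof. You play the given path in $Y$, and after any one-shot coalitional deviation at period $\tau$ you switch to a punishment path in $Y$ that leaves some member of the deviating coalition no better off; the check at every history then reduces to the $\Psi$-condition for the path currently being followed, and because that condition covers every $\tau$, $Y$ need not be closed under suffixes.

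Indexing punishments by the set $D$ of players whose actions actually changed is a real gain in precision over the paper's ``choose one player $i\in C$'', since a history does not reveal which coalition deviated. Your fallback coalitional one-shot-deviation principle is not needed, because PCE deviations are one-shot, as in $\gamma^{\coalitions}$. It would also fail as sketched: different members can block different stages of a multi-period deviation, so the comparisons do not unravel backward.
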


\begin{proof}
Fix $x\in Y$. Since $x\in \Psi(Y)$, for every coalition $C$, period $\tau$, and deviation $\zeta^C_{\tau}$ from $x$,
there exists a family of paths $\{\underline{x}[i]\}_{i\in N}\subseteq Y$ such that  at least one member of $C$ is deterred by the corresponding continuation path $\underline{x}[i]$.

Construct a plan as follows. On path, play $x$. After any history corresponding to a coalition deviation $\zeta^C_\tau$ by coalition $C$ in period $\tau$, choose one player $i\in C$ satisfying $U_i(x)\ge U_i(x;\zeta^C_\tau;\underline{x}[i])$, and continue from period $\tau+1$ onward according to the path $\underline{x}[i]$. Proceed recursively at every history.

By construction, after every deviation at every history, there is some member of the deviating coalition who does not strictly gain. Hence no coalition profitably blocks at any history. Thus the constructed plan is a PCE whose equilibrium path is $x$, so $x\in PCEP$.
\end{proof}

\begin{lemma}\label{lemma:enforceabilityPCEP}
$PCEP \subseteq \Psi(PCEP)$.
\end{lemma}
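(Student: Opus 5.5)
We need to show $PCEP \subseteq \Psi(PCEP)$: every PCE path can be enforced using continuation paths that are themselves PCE paths. The plan is to read off the punishments directly from a PCE that generates the given path, using the definition of PCE in \cite{aliliu2026main} (specialized to repeated strategic form games, as in its Example 1). That definition requires that no coalition can profitably block at any history, and that the continuation play after every history is itself a PCE of the continuation game.

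First, fix $x\in PCEP$ and a PCE $s$ whose equilibrium path is $x$. Fix any coalition $C\subseteq N$, period $\tau\ge 1$, and deviation $\zeta^C_\tau\in Z^C$. Let $h$ be the history $(x_1,\ldots,x_{\tau-1},\zeta_\tau)$ reached after this deviation. Let $y$ be the continuation path that $s$ prescribes from period $\tau+1$ onward after $h$.

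Second, we argue that $y\in PCEP$. Because the repeated game is stationary, the continuation game after $h$ is strategically identical to the original game, with payoffs affinely transformed as in $u_i(H)=a_i(H)+b(H)U_i$. Since PCE is subgame-consistent (the restriction of a PCE to the subgame after $h$ is a PCE of that subgame), the continuation strategy profile is a PCE of the original game with path $y$. This is the step I expect to require the most care. It needs the precise form of the PCE definition in \cite{aliliu2026main}: whether PCE is defined recursively via continuation PCEs, and whether coalitional blocking compares the continuation following a deviation with the path of the incumbent PCE. I would verify that the affine rescaling by $a_i(H)$ and $b(H)=\delta^{t}>0$ preserves all of the strict and weak inequalities in the blocking conditions. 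Then set $\underline{x}[i]:=y$ for every $i\in N$, so that $\{\underline{x}[i]\}_{i\in N}\subseteq PCEP$.

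Third, we argue that some member of $C$ is deterred. Suppose instead that $U_i(x)<U_i(x;\zeta^C_\tau;y)$ for all $i\in C$. Then every member of $C$ strictly gains from the deviation $\zeta^C_\tau$ at period $\tau$, given the prescribed continuation $y$. This is a profitable coalitional block at the on-path history of length $\tau-1$, contradicting that $s$ is a PCE. (If the PCE blocking notion lets the coalition also choose continuation behavior, the fixed continuation is a special case; if it only allows one-shot deviations, this is exactly that case.) Hence some $i\in C$ satisfies $U_i(x)\ge U_i(x;\zeta^C_\tau;\underline{x}[i])$.

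Since $C$, $\tau$, and $\zeta^C_\tau$ were arbitrary, $x\in\Psi(PCEP)$.
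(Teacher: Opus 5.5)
Your proof is correct and follows essentially the same route as the paper: take a PCE generating $x$, use the continuation path after each deviation $(C,\tau,\zeta^C_\tau)$ as the punishment (it is a PCE path because continuations of a PCE are PCEs), and use the absence of profitable coalitional blocks to obtain a deterred member of $C$. Your constant family $\underline{x}[i]=y$ for all $i\in N$ is an explicit version of the paper's ``family generated from the continuation play,'' and you are more careful than the paper in flagging the stationarity and affine-rescaling step behind $y\in PCEP$.
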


\begin{proof}
Suppose $x\in PCEP$ is the equilibrium path generated by the PCE $\sigma$. By the definition of a PCE, for each coalition $C$, period $\tau$, and deviation $\zeta^C_{\tau}$, there exists a family of paths $\{x[i]\}_{i\in N}$  generated from the continuation play of $\sigma$ such that some player $i^*\in C$ is deterred from this deviation; furthermore, since the continuation of a PCE at any history is again a PCE, $\{x[i]\}_{i\in N}\subseteq PCEP$, so $PCEP \subseteq \Psi(PCEP)$.
\end{proof}

\paragraph{Proof of \cref{thm:compact}} Define recursively $X^0\equiv X$, $X^{k+1}\equiv\Psi(X^k)$ \text{for } $k\ge 0$. By \cref{lemma:monotone} $\Psi$ is monotone, and since $X$ is compact, by \cref{lemma:compact} the sets $\{X^k\}_{k=0}^\infty$ form a
decreasing sequence of compact sets. Therefore
\[
X^\infty:=\bigcap_{k=0}^\infty X^k
\]
is compact. To show that $PCEP$ is compact, we will show that $PCEP = X^{\infty}$. 

By \cref{lemma:enforceabilityPCEP}, $PCEP \subseteq \Psi(PCEP)$. Since  $\Psi$ is monotone and $PCEP\subseteq X = X^0$, we have $PCEP \subseteq \Psi(X^0) =  X^1$. Iterating this argument yields $PCEP \subseteq X^k$ {for all } $k\ge 0$, so $PCEP \subseteq X^\infty$. It remains to show that $X^\infty \subseteq PCEP$. By \cref{lemma:enforceability}, it suffices to prove that $X^\infty \subseteq \Psi(X^\infty)$.

The claim is trivially true if $X^\infty =\varnothing$. Suppose $X^\infty \ne \varnothing$, and fix an arbitrary $x\in X^\infty$ and an arbitrary coalition $C \subseteq N$, period $\tau \ge 1$, and deviation $\zeta^C_\tau \in Z^C$. Note that $x\in X^{k+1}=\Psi(X^k)$ for every $k\ge 0$. So for each $k$, there exists a family $\{\underline{x}[k,i]\}_{i\in N}\subseteq X^k$ satisfying the enforceability condition of $\Psi(X^k)$ for $x$.

 Since $X$ is compact, taking subsequences if necessary, each $\underline{x}[k,i] $ converges to some $\underline{x}[i] \in X$. Consider any player $i \in N$ and any $K \ge 0$. For all $k \ge K$, since the sequence $\{X^k\}_{k=0}^\infty$ is decreasing, it follows that $\underline{x}[k,i] \in X^{k} \subseteq X^K$. Because $X^K$ is compact, it is closed, so taking limits yields $\underline{x}[i] \in X^K$. Since $K$ was arbitrary, we conclude that
\[
\underline{x}[i] \in \bigcap_{K=0}^\infty X^K = X^\infty \text{ for all } i\in N.
\]
We will show that $\{\underline{x}[i]\}_{i\in N} $ can enforce $x\in X^\infty$.

Since $N$ is finite, we can select a single subsequence $\{k_\ell\}_{\ell=1}^\infty$ along which $\underline{x}[k_\ell,i] \to \underline{x}[i]$ {for all } $i \in N$. By the enforceability of $x$ in $\Psi(X^{k_\ell})$, for each $\ell$ there exists $i_\ell \in C$ such that
\[
U_{i_\ell}(x) \ge U_{i_\ell}\bigl(x;\zeta^C_\tau;\underline{x}[k_\ell,i_\ell]\bigr).
\]
Since $C$ is finite, by passing to a further subsequence we may assume that $i_\ell = i^*$ for all $\ell$, for some fixed $i^* \in C$. By continuity of payoffs and the convergence $\underline{x}[k_\ell,i^*] \to \underline{x}[i^*]$, taking limits yields $U_{i^*}(x) \ge U_{i^*}\bigl(x;\zeta^C_\tau;\underline{x}[i^*]\bigr)$, so $x$ is enforceable against coalition $C$ deviating in period $\tau$ and choosing $\zeta_{\tau}$ using punishments $\{\underline{x}[i]\}_{i\in N} \subseteq X^\infty$. Since the choice of $x\in X$, $C\subseteq N$, $\tau\ge 1$, and $\zeta^C_{\tau}\in Z^C$ above is arbitrary, this verifies $X^\infty \subseteq \Psi(X^\infty)$, so $PCEP = X^\infty$ and therefore is compact.\qed

\subsection{Proof of \cref{thm:6-1}}
We first prove the ``only if'' direction. Take an arbitrary $x[0]\in PCEP$. Since $PCEP$ is compact and each $U_i$ is continuous on $X$, for every player $i\in N$ we can select $x[i]\in \argmin \{ U_i(x) : x\in PCEP\}$.

Since $\{x[k]\}_{k=0}^n \subseteq PCEP \subseteq \Psi(PCEP)$, for every $k\in \{0\}\cup N$, every $\tau\ge 1$, every coalition $C\subseteq N$,
and every $\zeta^C_\tau\in Z^C$, there exist some player $j\in C$ and some
$y\in PCEP$ such that
\[
U_j(x[k]) \ge U_j(x[k];\zeta^C_\tau;y).
\]
Because $x[j]$ minimizes player $j$'s payoff over $PCEP$, $U_j(x[j]) \le U_j(y)$. Hence replacing the continuation path $y$ by $x[j]$ can only lower player $j$'s
continuation payoff, so
\[
U_j(x[k];\zeta^C_\tau;x[j])
\le
U_j(x[k];\zeta^C_\tau;y).
\]
Combining the two inequalities above yields $U_j(x[k]) \ge U_j(x[k];\zeta^C_\tau;x[j])$. Since $j\in C$, this proves the ``only if'' direction.

We now prove the ``if'' direction. Suppose there exists a family of paths $\{x[i]\}_{i\in N}\subseteq X$ such that for all $k\in \{0\}\cup N$, all $\tau\ge 1$, all coalitions $C\subseteq N$, and all $\zeta^C_\tau\in Z^C$,
there exists $j \in C$ with
\begin{equation}
U_j(x[k]) \ge U_j(x[k];\zeta^C_\tau;x[j]).
\label{eq:OPC-PCEP}
\end{equation}
We construct a plan $f$ as follows. There are $n+1$ states, indexed by $\{0\}\cup N$. In state $k$, the prescribed path is $x[k]$. If in state $k$ a coalition $C$ deviates at period $\tau$ via $\zeta^C_\tau$, choose one player $i=i(k,\tau,C,\zeta^C_\tau)\in C$ satisfying \eqref{eq:OPC-PCEP}, and from the next period onward switch to the path $x[i]$. 

We claim that $f$ is a PCE. Consider any history at which the continuation path is along $x[k]$, $k\in \{0\}\cup N$. If coalition $C$ deviates by choosing some $\zeta^C\in Z^C$, then by construction the continuation path following the deviation is $x[j]$ for some $j\in C$ satisfying \eqref{eq:OPC-PCEP}. Thus player $j$ does not strictly gain and this deviation is not profitable. Because the same argument applies along every path $x[k]$, no coalition can profitably deviate after any history. Therefore $f$ is a PCE. Its equilibrium path is $x[0]$, so $x[0]\in PCEP$. \qed

{
	\addcontentsline{toc}{section}{References}
	\setlength{\bibsep}{0.25\baselineskip}
	\bibliographystyle{jpe}
	\bibliography{coalitions}

@article{greenberg1989application,
  title={An application of the theory of social situations to repeated games},
  author={Greenberg, Joseph},
  journal={Journal of Economic Theory},
  volume={49},
  number={2},
  pages={278--293},
  year={1989},
  publisher={Elsevier}
}

@unpublished{aliliu2026main,
	Author = {Ali, S. Nageeb and Ce Liu},
	Note = {Working Paper},
	Title = {Coalitions in Repeated Games},
	Year = {2026}
}

@article{abreu1990toward,
	Author = {Abreu, Dilip and Pearce, David and Stacchetti, Ennio},
	Journal = {Econometrica},
	Pages = {1041--1063},
	Publisher = {JSTOR},
	Title = {Toward A Theory of Discounted Repeated Games with Imperfect Monitoring},
	Year = {1990}}

@article{abreu1988opc,
	Author = {Abreu, Dilip},
	Journal = {Econometrica},
	Number = {2},
	Pages = {383--396},
	Publisher = {Citeseer},
	Title = {On the Theory of Infinitely Repeated Games with Discounting},
	Volume = {56},
	Year = {1988}}
}

\end{document}